\renewcommand{\epsilon}{\ve}
\def\ve{\varepsilon}
\newcommand{\pr}[2][]{\mbox{Pr}\ifthenelse{\not\equal{}{#1}}{_{#1}}{}\!\left[#2\right]}
\newtheorem{theorem}{Theorem}
\newcommand{\ignore}[1]{}
\newcommand{\bg}[1]{\medskip\noindent{\bf #1}}
\newcommand{\oldbound}[1]{{}}
\title{Online fair division with arbitrary entitlements}
\author{
Kushagra Chatterjee 
\and
Biswadeep Sen \and
Yuhao Wang
\affiliations
National University of Singapore\\
\emails
e0823067@u.nus.edu \and
e0989386@u.nus.edu \and
yuhaowang@u.nus.edu
}
\begin{document}

\maketitle

\begin{abstract}
The division of goods in the online realm poses opportunities and challenges. While innovative mechanisms can be developed, uncertainty about the future may hinder effective solutions. This project aims to explore fair distribution models for goods among agents with arbitrary entitlements, specifically addressing food charity challenges in the real world. Building upon prior work in \cite{aleksandrov2015online}, which focuses on equal entitlements, our project seeks to better understand the proofs of the theorems mentioned in that paper, which currently only provide proof sketches. Our approach employs different proof techniques from those presented in \cite{aleksandrov2015online}.
\end{abstract}

\section{Introduction} \label{sec:intro}
Resource allocation is a crucial issue that society faces, requiring the distribution of limited and often expensive resources among different parties. As a result of environmental, economic, and technological changes, there is growing pressure to allocate resources efficiently.
Fair division problems \cite{brams1996fair} are categorized based on several factors, such as divisible or indivisible goods, centralized or decentralized mechanisms, and cardinal or ordinal preferences. However, these categories fall short of representing the intricacies of many real-world allocation problems.

Specifically, a significant amount of prior research in fair division presumes an offline and fixed problem, where it is assumed that the agents receiving the resources and the resources themselves are predetermined and unchanging.
However, the practical reality frequently differs from these assumptions. Fair division problems usually occur online \cite{walsh2014allocation, walsh2015challenges, aleksandrov2020online}, and the agents, resources to be distributed, or both may not be predetermined and may potentially change over time.
and thus there is a need to develop more sophisticated and realistic models and mechanisms to address these challenges.

Aleksandrov et al \cite{aleksandrov2015online} propose an algorithm for fairly distributing food donations in a food bank by matching donors with recipients. The algorithm considers factors such as food preferences, dietary restrictions, and geographical location to ensure that the distribution is fair and efficient.
This paper \cite{aleksandrov2015online} uses simulations to evaluate the performance of the proposed algorithm and compares it with other fair division algorithms. The authors also provide some theoretical results regarding the fairness of the algorithm.
Beside, \cite{chakraborty2021weighted} 
present and examine novel fairness concepts based on envy for agents who possess weights that measure their entitlements in the offline allocation of indivisible items, where the least weight-adjusted frequent picker is the next agent to pick. There proposed WEF1 algorithm can be computed efficiently using a weighted round-robin algorithm.

\subsection{Problem Formulations}\label{sec: motivation}
The motivation behind the work of Aleksandrov et al \cite{aleksandrov2015online} was the \emph{Food Bank Problem}. 
The food donation process involves donating food to a food bank, which is then distributed to specific charities serving particular groups of people. As soon as the food is received by the food bank, it must be promptly allocated to one of the charities, even before they know what other donations will be received throughout the day. More specifically, 
the online model for the donated food allocation is as follows:
suppose there are $n$ agents and $m$ items. Each agent $a_i$ has some utility $u_{ij}$ for each item $j$. At time moment $j$, item $j$ arrives, and each agent $a_i$ bids for $j$ and an \emph{online randomized allocation mechanism} allocates this item to an agent.  
\begin{itemize}
    \item \texttt{LIKE mechanism}: agent $a_i$ is feasible for item $j$ if they bid positively for it;
    \item \texttt{Balanced LIKE machanism}: agent $a_i$ is feasible for item $j$ if they bid positively for it and have fewest items among those bidding positively for it. 
\end{itemize}
Nevertheless, there are limitations to the study as it only examines a basic set of bidding mechanisms in which agents express their preference for items using binary utilities (0/1).
However, what if agents have arbitrary entitlements instead of equal entitlements?
For example, it might happen that a charity only serves $20\%$ of the population of poor people and another charity serves more than $50\%$ of the population of poor people. In that case, the former charity should receive around $20\%$ of the total food allocated to the food bank, and the latter charity should receive more than $50\%$ of the total food allocated to the food bank.  
Our study is driven by the desire to explore a scenario in which charities possess arbitrary entitlements regarding the food they can acquire from the food bank in the online setting.
\subsection{Our Contributions}
Our project is inspired by the previous work of Aleksandrov et al \cite{aleksandrov2015online}. 
The setting in this paper is: there is a set of $n$ agents $N = \{ A_1, \dots, A_n\}$ and a set of $m$ items $\mathcal{F} = \{F_1,\dots, F_m\}$. Each agent $A_j$ has a utility $u_j(F_i)$ over the item $F_i$. Each item $F_i$ arrives at each time step $t_k$ and each agent bids (i.e. reports a value) for that item. Based on their bidding the item must be assigned to one of the agents. The next item is then revealed. This continues for $m$ steps. To allocate items in this online model, the authors consider a simple class of bidding mechanisms in which agents merely declare if they like items or not (i.e. they assign either $0$ or $1$ value to each item). 

In our project, we study the online food allocation model where agents have arbitrary entitlements. We extend this model to the setting where agents have arbitrary entitlements instead of equal entitlements. In our setting, in addition to the previous setting, each agent $A_i \in N$ has a fixed weight $w_i \in \mathbb{R}_{>0}$; these weights regulate how agents value their own allocated bundles relative to those of other agents. 
We have formulated the following research inquiries:
\begin{enumerate}
    \item  Is it feasible to adapt and expand the mechanisms detailed in \cite{aleksandrov2015online} for online food distribution with equal entitlements to accommodate situations where the agents possess varying entitlements?
    \item To what extent do the proposed algorithms adhere to fundamental principles such as \emph{Strategyproofness} and \emph{Envy-freeness}?
\end{enumerate}
To tackle these questions, we introduced two online weighted fair division mechanisms:
\begin{itemize}
    \item \texttt{Weighted LIKE} mechanism: Assign the food item $F_i$ to agent $j$ with probability $\frac{w_i}{\sum_{i \in B_j} w_j}$;
    \item \texttt{Weighted Balanced LIKE} mechanism: Assign food item $F_i$ at round $i$ uniformly at random to agent $j$ who has the least $\frac{A_{i-1, j}}{w_j}$ value.
\end{itemize}

\subsection{Outline of the Paper}
Our report is organized as follows: In Section \ref{sec: motivation}, we provide the motivation behind our report. In Section \ref{sec: preliminaries}, we provide the necessary notations and preliminaries used in our proofs. Section \ref{sec:online0/1} briefly mentions the results presented in \cite{aleksandrov2015online}, while also defining the notions of \emph{strategy-proofness} and \emph{envy-freeness}. Next, in Section \ref{sec:our_work}, we present our \texttt{weighted LIKE} and \texttt{weighted Balanced LIKE} mechanisms, along with a table summarizing our results. We also compare our findings with the previously known results of Alekasandrov et al \cite{aleksandrov2015online}. We introduce the concept of the game tree, which helps in calculating expected utility in our \texttt{weighted Balanced LIKE} mechanism, in Section \ref{sec:gametree}. In Section \ref{sec:strategy-proof}, we present our results related to the \emph{strategy-proofness} of the \texttt{weighted LIKE} and \texttt{weighted Balanced LIKE} mechanisms. Finally, in Section \ref{sec:envy-free}, we present our results related to the \emph{envy-freeness} of the \texttt{weighted LIKE} and \texttt{weighted Balanced LIKE} mechanisms.



\section{Notations and Preliminaries}\label{sec: preliminaries}
The study considers a scenario in which there are $N$ agents denoted by the set ${1, 2, . . . , n}$. The agents interact over a set of $m$ items denoted by $\mathcal{F} = \{F_1, F_2, \dots, F_m\}$. The entitlement of agent $j$ for a particular item $F_i$ is denoted by $w_j$, while $B_k$ represents the set of agents who have placed bids for item $F_k$ during round $k$. The utility of agent $j$ for the item $F_i$ is represented by $u_j(F_i)$, and the probability that agent $j$ receives the item $F_i$ is denoted by $P(F_i,j)$. The set of items allocated to agent $j$ after the last round is represented by $A_j$, and the expected utility of agent $i$ over the allocation assigned to agent $j$ is denoted by $E[u_i(A_j)]$. Furthermore, the study defines $A_{j,k}$ as the set of items allocated to agent $j$ up until round $k$, while \emph{received-states$(j, k)$} is the set of states in the game tree that signify that agent $j$ received item $F_k$ in round $k$. Additionally, the \emph{paths$(j,k)$} denote the set of paths from the \emph{received-states$(j,k)$} to the root in the game tree.

\section{Online Model with 0/1 Utilities and equal entitlements}\label{sec:online0/1}
In \cite{aleksandrov2015online} the authors devised two mechanisms (LIKE and Balanced LIKE) for fairly allocating the food items to charities. 
In this section, we mention the examples for negative results provided in \cite{aleksandrov2015online}, which are applicable to our setting as well.



\subsection{Fair-allocation Mechanisms}


\begin{itemize}
    \item \texttt{LIKE mechanism}: Once a food packet arrives at the food bank, the food bank asks the charities whether they would like to have the food packet or not. Then the food packet is allocated to one of those charities (chosen uniformly at random) that like to have that food packet. 
    \item \texttt{Balanced LIKE} mechanism: Suppose a food packet $F_k$ arrives at a time step $k$ at the food bank. This time also the food bank asks the charities whether they would like to have the food packet or not. In the \texttt{Balanced LIKE} mechanism, the food packet is allocated to one of those charities (chosen uniformly at random) that like to have that food packet and has the least number of food packets until time step $k - 1$. 
\end{itemize}

\subsection{Justice critera}
We now examine the axiomatic properties of these mechanisms.

\subsubsection{Strategy-proofness}
The \emph{strategy-proofness} property ensures that no participant in the mechanism can benefit from misrepresenting their preferences or manipulating the mechanism in any way. The goal of \emph{strategy-proofness} is to ensure that the outcome of the allocation process is fair and efficient, regardless of how participants behave strategically. In this Section, we briefly discuss the \emph{strategy-proofness} results discussed in \cite{aleksandrov2015online}.

First, we mention the positive results given in \cite{aleksandrov2015online} related to \emph{strategy-proofness} of the \texttt{LIKE} and \texttt{Balance LIKE} mechanisms.
\begin{theorem}\label{thm:like_str_proof}
\texttt{LIKE mechanism} is strategy-proof.
\end{theorem}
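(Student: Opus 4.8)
The plan is to prove strategy-proofness by exhibiting \emph{truthful bidding as a weakly dominant strategy} for every agent, no matter what the other agents report. The single structural feature of the \texttt{LIKE mechanism} that drives everything is that the allocation of each item $F_k$ is decided using only the bid set $B_k$ for that item, and is statistically independent of the bids (and hence the allocations) for all other items. My first step would therefore be to record this per-round independence precisely and use it, together with additivity of utilities, to decompose an agent's total expected utility as $E[u_j(A_j)] = \sum_k u_j(F_k)\, P(F_k,j)$, where each factor $P(F_k,j)$ depends only on the round-$k$ bids. This lets agent $j$ optimize the report for each item $F_k$ separately.

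Next I would fix an agent $j$, fix an arbitrary profile of reports for the remaining agents, and isolate the expected contribution of $F_k$ to $j$'s utility. Writing $b$ for the number of \emph{other} agents who bid for $F_k$, if $j$ bids then $P(F_k,j)=1/(b+1)$ and the contribution is $u_j(F_k)/(b+1)$; if $j$ abstains the contribution is $0$. I would then split on the two values allowed by the $0/1$ model: when $u_j(F_k)=1$, bidding yields $1/(b+1)>0$ against $0$ for abstaining, so the truthful bid is weakly (in fact strictly) better; when $u_j(F_k)=0$, both actions contribute $0$, so truthful abstention is weakly optimal. In every case the truthful action maximizes the contribution of $F_k$.

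Summing over $k$ and invoking the independence from the first step, truthful reporting maximizes $E[u_j(A_j)]$ against every profile of the others' reports, which is exactly the assertion that the mechanism is strategy-proof.

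The step I expect to be the genuine (if modest) obstacle is the first one: making the additive decomposition airtight in the \emph{online} setting, where $j$ chooses the report for $F_k$ with knowledge of the past allocation history $A_{j,k-1}$. One must argue that conditioning on this history cannot create a profitable coupling between the report for $F_k$ and the outcomes of other rounds -- which holds because the randomization at round $k$ consults only $B_k$ -- and, relatedly, that an agent can never gain by bidding for an item they do not value in order to influence future rounds, since no such influence exists. Once this independence is nailed down, the two-case per-item argument is entirely routine.
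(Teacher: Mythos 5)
Your proposal is correct and follows essentially the same route as the paper: the paper does not prove Theorem~\ref{thm:like_str_proof} directly (it cites it from prior work), but its proof of the weighted generalization in Section~\ref{sec:strategy-proof} is exactly your argument --- a per-item case split (bidding on an unvalued item changes nothing; abstaining on a valued item only loses utility) combined with the key observation that in the \texttt{LIKE} mechanism the allocation of each item depends only on that item's bids, so a deviation cannot influence future rounds. Your write-up is if anything more careful, since you make the additive decomposition of expected utility and the per-round independence explicit rather than asserting them.
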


\begin{theorem}\label{thm:blike_str_proof_0/1_2agents}
  The \texttt{Balanced-LIKE} mechanism is strategy-proof for two agents with $0/1$ utilities.    
\end{theorem}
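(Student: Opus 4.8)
The plan is to fix the (at most two) agents as a potential manipulator $A$ and a second agent $B$ who reports truthfully, and to exploit the following simplification: since $B$ reports truthfully, $B$'s bid on item $F_t$ is just the fact ``$u_B(F_t)=1$'', which is history-independent, so $B$'s entire bid sequence is fixed in advance. Moreover $A$'s own $0/1$ values are fixed, so with $B$ fixed the only randomness in the process is the uniform tie-breaking in \texttt{Balanced LIKE}, and $A$ faces a finite-horizon optimization. I would therefore invoke the one-shot deviation principle: truthful play is optimal for $A$ iff, at every reachable state (a pair of current item-counts before some round $t$), bidding truthfully on $F_t$ and truthfully thereafter is at least as good as any single deviation on $F_t$ followed by truthful play. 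The key structural observation driving the whole argument is that, because items go only to bidders and truthful play bids exactly on liked items, along a truthful (or truthful-except-one-bid) branch $A$'s utility equals $A$'s final item count; hence it suffices to compare \emph{final counts}.

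The core technical step is a coupling. First I would treat the under-reporting deviation (declining a liked $F_t$). Compare two runs from the same pre-$t$ state that agree on all bids except $A$'s bid on $F_t$, and couple the tie-break coins. Since both $A$ (truthful after $F_t$) and $B$ bid identical, history-independent sequences, the only divergence is who holds which items, captured by the difference of count-pairs $(a_{\mathrm{bid}}-a_{\mathrm{dec}},\,b_{\mathrm{bid}}-b_{\mathrm{dec}})$. I claim this difference, once created at round $t$, stays forever within the four states $(0,0),(1,0),(1,-1),(0,1)$, and that in every one of them $a_{\mathrm{bid}}\ge a_{\mathrm{dec}}$. I would verify closure by a finite case check at each round (neither bids / only $A$ / only $B$ / both bid, with sub-cases on whether the two counts are equal and both outcomes of the coupled tie coin): each transition maps these four states into themselves and preserves $a_{\mathrm{bid}}\ge a_{\mathrm{dec}}$. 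This yields that $A$'s final count, hence utility, is almost surely no smaller when $A$ bids on $F_t$, so under-reporting never helps.

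For the over-reporting deviation (bidding on a disliked $F_t$) I would reuse the same coupling: the only effect of a won disliked item is to raise $A$'s count by one relative to truthful play, and by the reverse monotonicity built into the same automaton (a lower current count yields weakly more future receipts) this can only weakly decrease the number of future \emph{liked} items $A$ receives; since the disliked item itself contributes $0$, truthful play is again at least as good. Combining the two cases through the one-shot deviation principle gives strategy-proofness. The main obstacle I anticipate is pinning down the invariant in the second paragraph --- identifying the correct four-state difference set and checking closure across all round and tie cases --- and this is precisely where the restriction to two agents is essential: with three or more agents the item declined by $A$ can land with a third agent, the count difference is no longer self-correcting, and the analogous automaton escapes any bounded $A$-dominant region, consistent with the mechanism failing to be strategy-proof for $\ge 3$ agents.
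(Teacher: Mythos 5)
Your proposal is correct, and it takes a genuinely different route from the paper's own argument (the paper proves this statement as the equal-weights special case of its weighted Balanced LIKE theorem in Section~\ref{sec:strategy-proof}). The paper's proof is a game-tree, expected-utility argument: it splits on the direction of the lie, and for declining a liked item it observes that the manipulator forfeits winning probability $p$ at the lying round, that her compensating gain along the diverged branch is at most $1-(1-p)=p$ before the two branches reach identically labelled states, and that once the states merge the subtrees coincide, so no further gain is possible. You instead prove a pathwise statement: coupling the tie-breaking coins of the two runs, you show the count-difference process stays in the finite set $\{(0,0),(1,0),(1,-1),(0,1)\}$, so the run in which the agent bids on the liked item dominates in final count \emph{almost surely} under the coupling, not merely in expectation; with $0/1$ utilities and (near-)truthful bidding, utility equals item count, so expected-utility dominance follows. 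Your route buys two things the paper's write-up lacks. First, you invoke the one-shot deviation principle to reduce arbitrary manipulations (lying on many items) to single-round deviations; the paper analyzes only a single lie and leaves that reduction implicit. Second, your over-reporting case is complete: the paper's Case 1 only notes that a disliked item contributes $u_1(F_i)=0$ and never addresses that winning it raises the agent's count and can cost future liked items, whereas your invariant ($d_a\le 1$, offset by the disliked item's zero contribution) closes exactly that gap. The shared core insight is the same --- with two agents a single lie creates a bounded, self-correcting discrepancy that eventually vanishes, which is also why the argument breaks for three or more agents, as you note --- but your coupling-plus-invariant formulation is the more rigorous rendering. The step you flag as the main obstacle, closure of the four-state set under all round types (neither bids, only one bids, both bid, with tie sub-cases), does check out by exactly the finite case analysis you describe, so the plan goes through as stated.
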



Now we mention the negative results given in \cite{aleksandrov2015online} related to strategy proofness of \texttt{LIKE} and \texttt{Balanced LIKE} mechanisms. We also mention the examples here, the same examples would work for the negative results in our setting with arbitrary entitlements.
\begin{theorem}\label{thm:blike_str_proof_0/1}
The \texttt{Balanced-LIKE} mechanism is \emph{not} strategy-proof for more than two agents even with $0/1$ utilities.
\end{theorem}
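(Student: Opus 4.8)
The plan is to prove a negative (impossibility) result, so the natural strategy is to exhibit an explicit counterexample: a small instance with exactly three agents and $0/1$ utilities in which some agent can strictly increase its expected utility by misreporting its bids. Since Theorem~\ref{thm:blike_str_proof_0/1_2agents} already establishes strategy-proofness for two agents, the counterexample must genuinely use the presence of a third agent — the manipulation should exploit how the \texttt{Balanced-LIKE} tie-breaking and ``fewest items'' rule interacts across three bidders rather than two. First I would fix a short item sequence (a handful of items suffices) and specify, for each item, which agents like it under truthful reporting. The key design idea is to set up a situation where one agent, call it the manipulator, can decline (bid $0$ on) an early item it would otherwise be eligible for, thereby keeping its running item-count low, so that on a later, more ``valuable'' item it becomes the unique agent with the fewest items and wins it outright (or with higher probability) instead of sharing the randomization with others.

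The main computational step is to compute the manipulator's expected utility under both the truthful strategy and the deviating strategy. Because \texttt{Balanced-LIKE} allocates uniformly at random among the eligible agents holding the fewest items so far, each item's allocation probability depends on the entire history of who received what, so I would organize this as a small case analysis over the random allocation outcomes of the first one or two items — effectively a tiny game tree (precisely the tool foreshadowed in Section~\ref{sec:gametree}). For each leaf of this tree I would record the manipulator's item counts and hence its eligibility and winning probability on the remaining items, then take the expectation. The goal is to produce two numbers, $\E[u(\text{truthful})]$ and $\E[u(\text{deviation})]$, and verify the strict inequality $\E[u(\text{deviation})] > \E[u(\text{truthful})]$.

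I would present the counterexample concretely: list the agents $A_1, A_2, A_3$, the sequence of items with their ``like'' sets, identify $A_1$ (say) as the manipulator, and describe the single bid it flips from $1$ to $0$. Then I would exhibit the truthful allocation probabilities round by round, compute the truthful expected utility, repeat the computation for the manipulated bid profile, and compare. To keep the arithmetic transparent I would choose the smallest instance that works, likely three or four items, so that the tie-breaking denominators ($2$ and $3$) stay simple and the strict gain is visible by inspection.

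The hard part will be finding an instance that is simultaneously minimal and genuinely three-agent in character: I must ensure the manipulation does \emph{not} already succeed with two agents (otherwise it would contradict Theorem~\ref{thm:blike_str_proof_0/1_2agents}), which means the third agent has to play an essential role in shifting the item counts so that skipping an early item pays off later. A secondary subtlety is confirming that the deviation is a \emph{weak} misreport under the $0/1$ model — the manipulator really does like the skipped item, so declining it is a genuine lie rather than an honest pass — and that the expected-utility gain is strict rather than merely tied. Once a correct instance is fixed, the remaining work is the routine game-tree expectation calculation, which I would streamline rather than belabor.
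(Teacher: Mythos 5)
Your plan is exactly the paper's approach: the paper proves this theorem by exhibiting a single three-agent, three-item counterexample built on precisely the design idea you describe (a manipulator who declines an early item it likes in order to keep its running item count low and thereby win later items with higher probability). However, your proposal stops at the blueprint. You never pin down an instance, never compute the two expected utilities, and you explicitly defer ``the hard part'' of finding an instance that genuinely needs three agents. For a non-strategy-proofness claim, the witness instance together with the two numbers \emph{is} the entire proof, so as written there is a genuine gap: nothing is established until an instance is produced and the strict inequality verified.

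For the record, the paper's witness confirms that your plan does pan out: take agents $A$, $B$, $C$ and items $I_1, I_2, I_3$, where $A$ likes all three items, $B$ likes $I_1$ and $I_3$ only, and $C$ likes $I_2$ only. Bidding truthfully, $A$ wins $I_1, I_2, I_3$ with probabilities $\frac{1}{2}$, $\frac{1}{4}$, $\frac{3}{8}$ respectively, for an expected utility of $\frac{9}{8}$; bidding only for $I_2$ and $I_3$, agent $A$ lets $B$ take $I_1$ outright, then wins $I_2$ with probability $\frac{1}{2}$ and $I_3$ with probability $\frac{3}{4}$, for a total of $\frac{5}{4} > \frac{9}{8}$. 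One small mismatch with your framing: in the $0/1$ model there is no later ``more valuable'' item for the manipulator to capture outright --- all liked items are worth $1$ --- so the gain is instead spread across \emph{two} later items as increased winning probabilities ($\frac{1}{4} \to \frac{1}{2}$ on $I_2$ and $\frac{3}{8} \to \frac{3}{4}$ on $I_3$, outweighing the sacrificed $\frac{1}{2}$ on $I_1$); this is exactly why the round-by-round game-tree bookkeeping you propose is the right way to verify the example, and why searching only for a single decisive later item might have led you astray.
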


\begin{proof}
    Suppose, there are three agents $A$, $B$ and $C$ and three items $I_1$, $I_2$ and $I_3$. Suppose, Agent $A$ has utility $1$ for every item. Agent $B$ has utility $1$ items $I_1$ and $I_3$ and $0$ for $I_2$. Agent $3$ has utility $1$ only for item $I_2$ and $0$ for the rest. Now, if agent $1$ sincerely bids then he gets an expected utility of $\frac{9}{8}$ whereas if she only bids for the items $I_2$ and $I_3$ then she gets an expected utility of $\frac{5}{4}$.
\end{proof}

\begin{theorem}\label{thm:blike_str_proof_general}\label{thm:bal_like_strproof_genu}
 The \texttt{Balanced-LIKE} mechanism is \emph{not} strategy-proof for two agents with general utilities.
 \end{theorem}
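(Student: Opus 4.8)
The plan is to establish this negative result by exhibiting a single explicit counterexample: a two-agent, two-item instance with genuinely non-binary utilities in which one agent strictly increases their expected utility by understating their preferences. Since Theorem~\ref{thm:blike_str_proof_0/1_2agents} shows truthfulness is a dominant strategy in the $0/1$ case, the construction must exploit a gap between two distinct positive utility values, so I would have the deviating agent value the two items unequally and then profit by declining the cheaper item in order to stay ``behind'' and claim the more valuable one.

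Concretely, I would take agents $A_1, A_2$ and items $F_1, F_2$ arriving in this order. Let $u_1(F_1) = a$ and $u_1(F_2) = b$ with $0 < a < b$ (for a clean instance one may fix $a = 1$, $b = 2$), and let $A_2$ bid positively for both items. The first step is to compute $A_1$'s expected utility under truthful bidding. In round $1$ both agents hold zero items and both like $F_1$, so each receives it with probability $\frac12$; the balancing rule then hands $F_2$ to whichever agent is behind in round $2$. Thus with probability $\frac12$ agent $A_1$ takes $F_1$ and then loses $F_2$ to $A_2$ (utility $a$), and with probability $\frac12$ agent $A_1$ loses $F_1$ but, having the fewest items, secures $F_2$ (utility $b$), giving expected utility $\frac{a+b}{2}$.

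The second step is to evaluate the deviation in which $A_1$ reports utility $0$ for $F_1$ (declines it) while bidding truthfully for $F_2$. Now $F_1$ goes to $A_2$, so entering round $2$ agent $A_1$ holds zero items against $A_2$'s one; as the unique least-endowed agent liking $F_2$, agent $A_1$ receives $F_2$ with certainty, for a true expected utility of $b$. Since $b > \frac{a+b}{2}$ exactly when $a < b$, the deviation is strictly profitable and the mechanism is not strategy-proof. I would close by remarking that when $a = b$ (the $0/1$ regime) the two quantities coincide, which is consistent with the positive result of Theorem~\ref{thm:blike_str_proof_0/1_2agents}.

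The only real care required is in pinning down the mechanism's priority rule---specifically that ``fewest items up to round $k-1$'' gives the trailing agent outright priority for $F_2$ in each branch---and in confirming that $A_2$'s fixed bids never alter $A_1$'s item counts in either scenario. Both points are routine once the two-round process is written out explicitly, so I do not expect a genuine obstacle; the substance of the argument is the recognition that unequal utilities create an incentive to forgo a low-value item early to gain the balancing priority later.
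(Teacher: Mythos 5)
Your proposal is correct and takes essentially the same approach as the paper: a two-agent, two-item counterexample in which the deviating agent values the second item more than the first and profits by declining the first item so as to retain balancing priority for the second (the paper instantiates this with utilities $1/4$ and $3/4$ against an opponent valuing both items at $1/2$, while you parameterize with $0 < a < b$). Your parameterized version, together with the observation that the deviation gain vanishes exactly when $a = b$, is a slightly more general presentation of the identical argument.
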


\begin{proof}
Suppose, there are two agents $A$ and $B$ and two items $I_1$ and $I_2$. Suppose, agent $A$ has utilities $1/2$ for both the items $I_1$ and $I_2$. Agent $B$ has utility $1/4$ for the item $I_1$ and $3/4$ for the item $I_2$. Then if agent $B$ sincerely bids then she gets an expected utility of $1/2$ however if she bids only for the item $2$ she will get an expected utility of $3/4$.     
\end{proof}

\subsubsection{Envy-freeness}
We take into account both \emph{ex-post} and \emph{ex-ante} fairness notions, where \emph{Ex-post fairness} refers to the fairness of an outcome after the resources have been allocated. It judges the fairness of an allocation by looking at the end result, regardless of how the resources were distributed.
\textcolor{black}{\emph{Ex-ante fairness}}, on the other hand, refers to the fairness of the allocation process before it takes place. \textcolor{black}{It focuses on the fairness of the rules and procedures used to allocate resources}, rather than the outcome of the allocation.

\emph{Ex-post} fairness ensures that the outcome of the allocation is just and equitable, while \emph{ex-ante} fairness ensures that the allocation process is fair and unbiased. 
By considering both notions, we can create fair allocation mechanisms that produce just and equitable outcomes while ensuring that the process itself is fair and unbiased.

Now, we state the positive results mentioned in \cite{aleksandrov2015online} for the \texttt{LIKE} and \texttt{Balanced LIKE} mechanisms.

\begin{theorem}
    The \texttt{LIKE mechanism} is envy-free ex-ante.
\end{theorem}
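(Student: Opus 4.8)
The plan is to reduce ex-ante envy-freeness to a per-item comparison of winning probabilities, exploiting the fact that in the \texttt{LIKE mechanism} each item is allocated independently of the allocation history. Recall that a mechanism is envy-free ex-ante if, for every ordered pair of agents $i, j$, we have $E[u_i(A_i)] \ge E[u_i(A_j)]$, where the expectation is taken over the internal randomness of the mechanism. Since utilities are additive and $0/1$, I would first invoke linearity of expectation to write
$$E[u_i(A_j)] = \sum_{k=1}^m u_i(F_k)\, P(F_k, j),$$
and similarly $E[u_i(A_i)] = \sum_{k=1}^m u_i(F_k)\, P(F_k, i)$, where $P(F_k, j)$ is the probability that agent $j$ is awarded item $F_k$.

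Next I would observe that, in the \texttt{LIKE mechanism}, item $F_k$ is handed to one of the bidders in $B_k$ chosen uniformly at random, and this draw does not depend on which items were allocated in earlier or later rounds. Hence $P(F_k, j) = \frac{1}{|B_k|}$ whenever $j \in B_k$, and $P(F_k, j) = 0$ otherwise. The crucial step is then the per-item domination claim: whenever $u_i(F_k) = 1$, agent $i$ likes item $F_k$ and so $i \in B_k$, giving $P(F_k, i) = \frac{1}{|B_k|} \ge P(F_k, j)$ for every other agent $j$, with equality exactly when $j$ also bids for $F_k$ and a strict gap when $j$ does not. Summing this inequality over all items, weighted by $u_i(F_k) \in \{0,1\}$, yields $E[u_i(A_i)] \ge E[u_i(A_j)]$, which is precisely the desired envy-freeness.

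The step requiring the most care is justifying the decomposition of $E[u_i(A_j)]$ into a sum of per-item contributions with a history-independent marginal $P(F_k, j)$: one must argue that the \texttt{LIKE} rule's choice for item $F_k$ depends only on the bid set $B_k$ and not on the outcomes of other rounds, so that $P(F_k, j)$ is well defined and equals $1/|B_k|$ regardless of the realized history. Once this independence (or, more weakly, linearity of expectation applied to the indicator that $j$ receives $F_k$) is in place, the remainder is a routine term-by-term comparison, since the only items contributing to either expectation are exactly those that agent $i$ values, and on each such item agent $i$ enjoys the largest possible winning probability among all agents.
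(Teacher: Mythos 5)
Your proof is correct and takes essentially the same approach as the paper's (the paper proves the weighted generalization, which reduces to this statement when all entitlements are equal): expand $E[u_i(A_j)]$ by linearity of expectation into per-item terms, substitute the LIKE mechanism's history-independent winning probabilities, and compare term by term. If anything, your version is slightly more careful: the paper substitutes $P(F_k,j) = w_j/\sum_{l\in B_k} w_l$ for every item and concludes exact equality, which tacitly assumes $j$ bids on every item that $i$ values, whereas your explicit case split ($P(F_k,j)=0$ when $j\notin B_k$, giving a strict gap) handles the general situation and yields the correct inequality $E[u_i(A_i)] \ge E[u_i(A_j)]$.
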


\begin{theorem}
    The \texttt{Balanced LIKE} mechanism is both envy-free ex-ante and bounded envy-free ex-post.
\end{theorem}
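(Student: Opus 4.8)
The statement bundles two claims about the (equal-entitlement) \texttt{Balanced LIKE} mechanism: \emph{ex-ante} envy-freeness, $E[u_i(A_i)]\ge E[u_i(A_j)]$ for every ordered pair $i,j$, and \emph{bounded ex-post} envy-freeness, that in every realized allocation the additive envy $u_i(A_j)-u_i(A_i)$ is at most the value of a single item (here $1$). The plan is to treat the two separately, since in the $0/1$ model they have very different flavors. Throughout I write $c_\ell(t)=|A_{\ell,t}|$ for the number of items agent $\ell$ holds after round $t$ (so the minimum-count rule at round $k$ compares the values $c_\ell(k-1)$), and I use that with $0/1$ utilities each agent bids only for items it values and hence only ever receives items it values; in particular $u_i(A_i)=|A_i|$ and $u_i(A_j)$ equals the number of items that $i$ bid for and $j$ received.

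For the ex-ante part I would first reduce it to a statement about the single-item win probabilities $P(F_k,\ell)$. Since $j$ can receive $F_k$ only if $j\in B_k$, \[ E[u_i(A_i)]-E[u_i(A_j)]=\sum_{k:\, i\in B_k}\left(P(F_k,i)-P(F_k,j)\right). \] Conditioning on the game-tree state entering round $k$, for any $k$ with $i\in B_k$ the conditional probabilities satisfy $P(F_k,i)\ge P(F_k,j)$ precisely when $c_i(k-1)\le c_j(k-1)$, and the comparison reverses when $i$ is ``ahead.'' Thus individual summands can be negative and a term-by-term argument fails; this is the crux of the difficulty. I would establish non-negativity of the whole sum by induction on the rounds using the game tree of Section~\ref{sec:gametree}, strengthening the inductive hypothesis so that it controls the joint law of the pair $(c_i,c_j)$ (or at least the expected signed imbalance together with the probability mass on states where $i$ leads). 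The point to be captured is that every round in which $i$ is ahead---and therefore concedes a contested item to $j$ in expectation---is ``pre-paid'' by the earlier rounds on items only $i$ valued, which is exactly how $i$ got ahead; the balancing rule never lets $j$ overtake $i$ on $i$-valued items by more than the slack $i$ created. Finding an invariant strong enough to be preserved round-to-round while implying the final inequality is, I expect, the main obstacle. A cleaner alternative I would also try is a coupling with the execution in which the bids of $i$ and $j$ are swapped: anonymity of the mechanism gives $P_{\mathrm{orig}}(F_k,j)=P_{\mathrm{swap}}(F_k,i)$, turning the target into a monotonicity comparison between two executions that differ only in the bids of $i$ and $j$.

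The bounded ex-post part, by contrast, admits a clean and complete argument that I would give in full. Track, as a function of the round $t$, the realized quantity $D(t)=u_i(A_{j,t})-u_i(A_{i,t})$, the ``envy so far.'' Splitting $j$'s winnings into items $i$ values and items $i$ does not, and writing $p_j(t)$ for the latter count, one has \[ D(t)=-\left(c_i(t)-c_j(t)\right)-p_j(t). \] Now observe that $D$ increases (by exactly $1$) only in a round where $j$ wins an item that $i$ also bid for; at such a round $j$ is a minimum-count bidder while $i\in B_k$, so $c_j(k-1)\le c_i(k-1)$, and hence $D$ just before the increment equals $-(c_i-c_j)-p_j\le 0$. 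Every other type of round ($i$ wins, a third agent wins, or $j$ wins an item $i$ dislikes) leaves $D$ unchanged or decreases it. Since $D(0)=0$ and $D$ rises only in unit steps starting from a value at most $0$, we get $D(t)\le 1$ for all $t$, and in particular $u_i(A_j)-u_i(A_i)\le 1$ at the end. This invariant is insensitive to the tie-breaking and to the number of agents, so it yields bounded (indeed EF1) ex-post envy-freeness directly.
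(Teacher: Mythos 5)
Your proposal splits into two halves of very different status. The ex-post half is correct and complete: tracking $D(t)=u_i(A_{j,t})-u_i(A_{i,t})$, showing that $D$ can only increase (by exactly one) in a round where $j$ is a minimum-count bidder while $i$ also bids, hence from a state where $c_j(k-1)\le c_i(k-1)$ and therefore $D\le 0$, gives $D\le 1$ throughout. This is essentially the same invariant the paper uses when it proves the weighted generalization in Section \ref{sec:envy-free}, but your version is more careful: the paper writes $u_i(A_j)=|A_j|$, which silently assumes every item $j$ holds is valued by $i$, whereas your decomposition $D(t)=-\left(c_i(t)-c_j(t)\right)-p_j(t)$ explicitly handles items $j$ won that $i$ does not value. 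For this half you match the paper's approach and tighten it.

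The ex-ante half, however, has a genuine gap: you correctly identify the crux (the single-round comparison $P(F_k,i)\ge P(F_k,j)$ reverses in states where $i$ is ahead, so term-by-term summation fails), but you never exhibit the round-to-round invariant, nor do you carry out the monotonicity comparison needed for the swap coupling; both are explicitly left as things you ``would try.'' Note also that the coupling reformulation cannot be closed by citing strategy-proofness, since the two executions differ in the bids of \emph{both} $i$ and $j$ (a bilateral, not unilateral, change), and the \texttt{Balanced LIKE} mechanism is not strategy-proof for more than two agents in any case. The paper's own argument (given for the weighted mechanism, which specializes to this statement at equal weights) commits to the induction over the game tree of Section \ref{sec:gametree}: when both agents bid in a round, every node at the previous level has symmetric children for ``$i$ receives'' and ``$j$ receives,'' so the marginal expected utilities for that round coincide; when $i$ abstains and $j$ collects items, the balancing rule compensates $i$ with a strictly higher probability on the next contested item. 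That argument is itself informal at the ``balances out'' step, but it is precisely the skeleton your ex-ante half is missing; as written, your proposal proves bounded envy-freeness ex-post but only poses, rather than solves, the ex-ante problem.
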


Now, we mention the negative results related to envy-freeness for the \texttt{LIKE} and \texttt{Balanced LIKE} mechanisms given in \cite{aleksandrov2015online}. We also give examples of the negative results here. The same examples would work for the negative results in our setting with arbitrary entitlements.

\begin{theorem}
    \texttt{LIKE mechanism} is not bounded envy-free ex-post.
\end{theorem}

\begin{proof}
    To show that the \texttt{LIKE mechanism} is not bounded envy-free ex-post even with 0/1 utilities, suppose 2 agents have utility $1$ for all $m$ items. There is one outcome in which the first agent gets lucky and is assigned every item. However, in this case, the other agent assigns utility of $m$ units greater to the first agent’s allocation than to their own (empty) allocation.
\end{proof}

\begin{theorem}
    The \texttt{Balanced LIKE} mechanism is neither bounded envy-free ex-ante nor bounded envy-free ex-post with general utilities.
\end{theorem}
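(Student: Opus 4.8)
The plan is to establish both negative claims by exhibiting a single counterexample (or two closely related ones) in which the \texttt{Balanced LIKE} mechanism with general utilities produces envy that is unbounded, thereby violating both the ex-ante and ex-post bounded envy-freeness conditions simultaneously. Since the theorem concerns \emph{general} utilities, I have full freedom in choosing the cardinal values, which is the key lever; the difficulty in the $0/1$ case (where one cannot amplify a single item's worth) disappears. First I would set up the smallest instance that already exhibits the phenomenon: two agents $A$ and $B$ and a short sequence of items. The idea is to make one item carry almost all of one agent's utility mass, so that the tie-breaking / least-loaded rule of \texttt{Balanced LIKE} forces that high-value item to the ``wrong'' agent with positive probability, producing large envy.

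For the ex-post direction, I would trace the mechanism on the chosen instance and identify an outcome of positive probability in which agent $B$ (say) receives a bundle whose $B$-utility is much smaller than the $B$-utility of $A$'s bundle, with the gap scaling with a free parameter in the utilities. Concretely, I would let both agents bid on an early item, so that with probability $1/2$ the least-loaded agent after round one is the ``wrong'' one, and then arrange the high-value item to land on the opponent along that branch; pushing the high value to infinity makes the realized envy unbounded, which defeats any claim of a uniform ex-post envy bound. For the ex-ante direction, I would compute the expected utilities $\E[u_B(A_A)]$ and $\E[u_B(A_B)]$ over the mechanism's randomization on the same (or a slightly adjusted) instance, using the \texttt{Balanced LIKE} allocation probabilities, and show that the difference $\E[u_B(A_A)] - \E[u_B(A_B)]$ likewise grows without bound as the free parameter grows, so no fixed additive constant can bound the ex-ante envy either.

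The main obstacle I anticipate is bookkeeping rather than conceptual: because \texttt{Balanced LIKE} conditions each round's allocation on the current load profile, the per-branch probabilities are coupled across rounds, so I would need the game-tree formalism introduced in Section~\ref{sec:gametree} to organize the computation of the expected utilities cleanly and to verify that the relevant ``bad'' branch genuinely carries positive probability. I would be careful that the envy I exhibit is measured from $B$'s own utility function over both bundles (envy of $A$'s allocation), not from $A$'s perspective, and that the instance respects the weighting/entitlement structure; choosing equal weights $w_A = w_B$ keeps the reduction to the unweighted \texttt{Balanced LIKE} rule transparent while still falling inside the general-utility regime. Once the counterexample and its unbounded-gap calculation are in place, both unboundedness statements follow immediately, so I would present one instance and then read off the two conclusions from the same analysis.
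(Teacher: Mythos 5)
Your ex-post argument is fine and is in the same spirit as the paper's, but the ex-ante half of your plan has a genuine gap. In the instance you sketch --- both agents bid on the early item, each getting it with probability $1/2$, after which the balancing rule sends the high-value item to whichever agent is empty-handed --- the two terminal branches are mirror images of each other: with probability $1/2$ agent $B$ holds the first item and $A$ the second, and with probability $1/2$ the roles are swapped. Consequently
$\E[u_B(A_A)] - \E[u_B(A_B)] = \bigl(\tfrac12 u_B(F_2)+\tfrac12 u_B(F_1)\bigr) - \bigl(\tfrac12 u_B(F_1)+\tfrac12 u_B(F_2)\bigr) = 0$,
no matter how large you push the free parameter. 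So the computation you propose to carry out ``on the same instance'' would return zero ex-ante envy rather than an unbounded quantity; symmetric bidding between two equally weighted agents can never generate ex-ante envy under this mechanism. You hedge with ``or a slightly adjusted instance,'' but that adjustment is precisely the missing idea, and nothing in your proposal identifies it.

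The paper's construction supplies the needed asymmetry: give agent $1$ utility $0$ for the first item $a$ so that agent $1$ does not bid on it at all. Then agent $2$ receives $a$ deterministically, and in the next round the balancing rule deterministically hands the high-value item $b$ (worth $p$ to agent $1$ and $p-1$ to agent $2$, with $p>2$) to agent $1$. Because the entire run is deterministic, ex-ante and ex-post allocations coincide, and agent $2$'s envy is $(p-1)-1 = p-2$, unbounded in $p$ --- a single two-item instance settles both claims simultaneously, with no game-tree bookkeeping at all. Note also that the paper chooses the utilities so that both agents have the same total utility ($p$ each), which forecloses the objection that the ``unbounded'' envy is merely an artifact of giving one agent a larger utility scale; your plan of simply ``pushing the high value to infinity'' does not address this normalization issue.
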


\begin{proof}
    Consider $2$ agents and $2$ items, $a$ and $b$. Suppose agent $1$ has utility $0$ for $a$ and $p$ for $b$, but agent $2$ has utility $1$ for item $a$ and $(p - 1)$ for item $b$ where $p > 2$. Note that both agents have the same sum of utilities for the two items. If agents bid sincerely then agent $2$ gets an expected utility of just $1$ and envies ex-ante agent $1$’s allocation which gives agent $2$ an expected utility of $(p - 1)$. As $p$ is unbounded, agent $2$ does not have bounded envy ex-post or ex-ante of agent $1$.
\end{proof}

\section{Online fair division with 0/1 utilities and arbitrary entitlements} \label{sec:our_work}

In our scenario, there are $n$ agents, and each agent has a private utility for $m$ items. At each time step, one of the $m$ items is presented, and the allocation mechanism must assign it to one of the agents. This process repeats for $m$ steps. To allocate items in this online model, we utilize a basic set of bidding mechanisms in which agents simply indicate their preference for or against an item. Each agent $A_i$ also has a weight $w_i$ denoting its total entitlement.

We propose the \texttt{Weighted LIKE} mechanism. Let after an item arrives, the agents who bid for it be denoted by the set $B \subseteq \{1,2,\dots,n\}$. Then by this mechanism, each agent $i \in B$ gets the item with probability $\frac{w_i}{\sum_{j \in B}{w_j}}$, where $w_i$ is the weight of the agent $i$.

The \texttt{Weighted LIKE} mechanism has a flaw in that agents may encounter misfortune, leading to unsuccessful bids for all items due to unfavorable coin tosses. This outcome is particularly undesirable in our Food Bank context, where an entire segment of the population may go hungry for the night. As a result, we explore a slightly more advanced mechanism to address this issue.

The \texttt{Weighted Balanced LIKE} mechanism assigns a food item $F_i$ at time step $i$ to an agent $j$ which has the least value of items allocated until round $(i - 1)$ relative to its weight, $\frac{|A_{j,(i - 1)}|}{w_j}$. If multiple items have an equal score, then it is distributed uniformly at random between them. This mechanism is less likely to leave agents empty-handed than the Weighted LIKE mechanism.

We provide Table \ref{table:1}, which summarises all our results for \texttt{weighted LIKE} and \texttt{weighted Balanced LIKE} mechanisms. The results we got are exactly the same as the results obtained by Aleksandrov et al \cite{aleksandrov2015online} for LIKE and Balanced LIKE mechanisms. Our negative results use the same examples as Aleksandrov et al \cite{aleksandrov2015online} because our \texttt{weighted LIKE} and \texttt{weighted Balanced LIKE} mechanisms behave as LIKE and Balanced LIKE mechanisms when there are equal entitlements.   

\begin{table*}[tbt]
\centering
\begin{tabular}{c c c c c} 
 \hline
 Method & Condition & Strategy Proof & BEF Ex-Post & BEF Ex-Ante \\ 
 \hline
 $\mathsf{W.LIKE}$  & 0/1-Utility & yes & yes & no \\ 
 $\mathsf{W.BALANCED}$-$\mathsf{LIKE}$ & General Utility & yes & yes & no \\ 
 $\mathsf{W.LIKE}$ & 0/1-Utility & no for $> 2$ agents & yes & yes \\
$\mathsf{W.BALANCED}$-$\mathsf{LIKE}$ & General Utility & no & no & no \\ 
 \hline
\end{tabular}
\caption{Table summarizing the fairness properties of \emph{WEIGHTED LIKE} and \emph{WEIGHTED BALANCED LIKE} mechanism}
\label{table:1}
\end{table*}

\section{Game Tree Construction}\label{sec:gametree}

To calculate the expected utility of an agent $j$ in a weighted balanced LIKE mechanism we need to take the help of \emph{game tree}.

A game tree is a tree where each node represents the game's state. For the weighted balanced LIKE mechanism, each node is labeled with a tuple $(k, \frac{|A_{1,k}|}{w_1} \dots \frac{|A_{n,k}|}{w_n})$ where $|A_{j,k}|$ represents the number of items allocated to agent $j$ until round $k$ and $w_j$ is the entitlement of agent $j$. Each node has $(n + 1)$ children where the first $n$ children represent one of $n$ agents that received the item on that round and the $(n + 1)th$ node represents that none of the agents received the item that round. We put weights on the edges connecting the parent node and the child node which represent the probability the game can go to the child state from the parent state.

Suppose at height $h$ (i.e. round $h$ of the game) of the game tree \emph{received-states$(j,h)$} represents the set of states where agent $j$ received the food item $F_h$ at round $h$. Note that at height $(h - 1)$ there are $(n + 1)^{h - 1}$ nodes in the game tree. Now, for each of these nodes at height $(h - 1)$ we will get exactly one child which represents that agent $j$ received the food item $F_h$ at round $h$. Hence, there are $(n + 1)^{h - 1}$ \emph{received-states$(j,h)$} at height $h$. We define \emph{paths$(j, h)$} as the set of paths from this \emph{received-states$(j,h)$} at height $h$ to the root of the game tree. For $0/1$ utilities, the expected utility got by agent $j$ (assuming he bid honestly) at round $h$ for the food item, $F_h$ is the sum of the products of weights along the \emph{paths$(j,h)$}. Refer to Figure \ref{fig:game_tree} for a better understanding of the game tree.

\begin{figure}
    \centering
    \includegraphics[scale = 0.25]{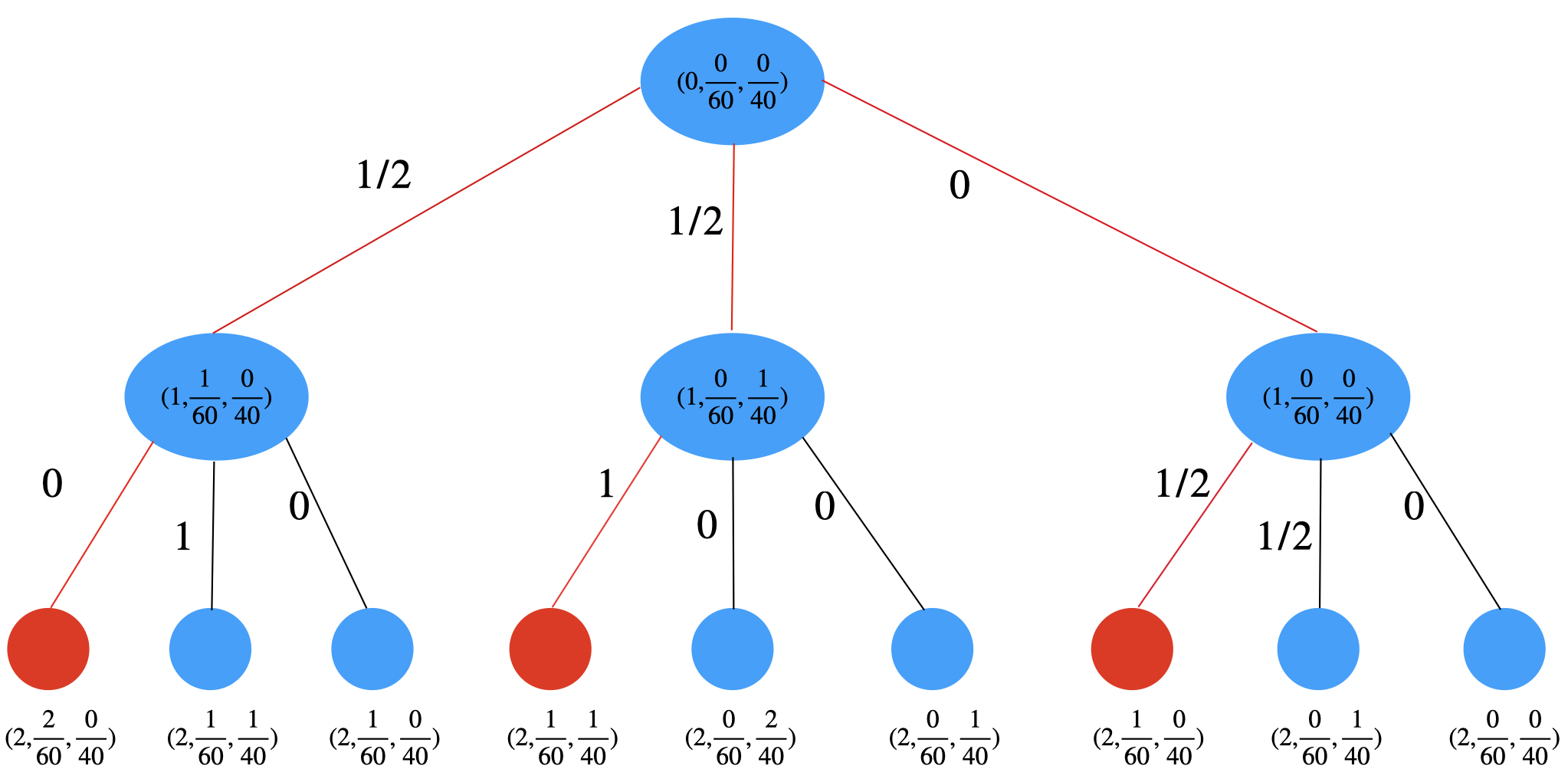}
    \caption{Image of a game tree for 2 agents and 2 items. We assume all agents bid for all items. Each node has three children, the first child represents that agent 1 received the item, the second child represents the second agent received the item and the third child represents that none of them received the item $F_h$ at height $h$. The weights represent the probability the game can reach the child state from the parent state using the weighted balanced LIKE mechanism. At height $2$ of the game tree there are $3$ states representing \emph{received-states$(1,2)$} (Red color states) and the red color paths are \emph{paths(1,2)}. Hence, the expected utility for Agent $1$ for item $F_2$ is $(0 * 1/2 + 1 * 1/2 + 1/2 * 0) = 1/2$ }
    \label{fig:game_tree}
\end{figure}
\section{Strategy-proofness}\label{sec:strategy-proof}

\begin{theorem}
    The weighted LIKE mechanism is strategy-proof even with general utilities.
\end{theorem}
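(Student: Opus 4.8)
The plan is to exploit the fact that the \texttt{weighted LIKE} mechanism treats each item in complete isolation: unlike the balanced variant, no state (such as the number of previously received items) is carried between rounds, so the random allocation of item $F_k$ is statistically independent of the allocation of every other item. Consequently, for any fixed behaviour of the remaining agents, the expected utility that agent $i$ obtains factorises as a sum of per-item contributions,
\[
E[u_i(A_i)] \;=\; \sum_{k=1}^{m} P(F_k, i)\, u_i(F_k),
\]
and agent $i$ may optimise each term separately. Here the relevant notion of strategy-proofness is that sincere bidding maximises this expected utility no matter how the other agents bid.

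First I would fix an arbitrary agent $i$ together with an arbitrary (possibly strategic) bidding profile of the other $n-1$ agents. For a given item $F_k$, let $W_{-i} = \sum_{j \in B_k \setminus \{i\}} w_j$ be the total weight of the \emph{other} agents who bid for $F_k$; note that $W_{-i}$ does not depend on whether $i$ itself bids. By the definition of the mechanism, if agent $i$ bids for $F_k$ then $P(F_k,i) = \frac{w_i}{w_i + W_{-i}} > 0$, whereas if agent $i$ declines to bid then $P(F_k,i) = 0$.

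Next I would compare the two options term by term. Since the utilities of goods are non-negative, bidding on $F_k$ yields a contribution $\frac{w_i}{w_i + W_{-i}}\, u_i(F_k) \ge 0$ to the sum, with equality exactly when $u_i(F_k) = 0$, while not bidding yields a contribution of $0$. Hence bidding is strictly better whenever $u_i(F_k) > 0$ and a matter of indifference when $u_i(F_k) = 0$. Therefore the sincere strategy --- bid for $F_k$ if and only if $u_i(F_k) > 0$ --- simultaneously maximises every term of the decomposition, and so maximises $E[u_i(A_i)]$, irrespective of the other agents' reports. Summing over all $m$ items delivers strategy-proofness.

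I expect no genuine obstacle here: the presence of arbitrary weights $w_j$ alters only the constant $\frac{w_i}{w_i + W_{-i}}$ and never its positivity, so the weighted case is dispatched exactly as the unweighted one. The single point that requires care is the justification of the per-item decomposition --- namely, confirming that because the mechanism carries no memory across rounds, the event ``agent $i$ receives $F_k$'' is unaffected by agent $i$'s bids on any other item --- and this is precisely the structural feature that separates \texttt{weighted LIKE} from \texttt{weighted Balanced LIKE}, where such independence fails and strategy-proofness can break down.
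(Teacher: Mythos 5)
Your proof is correct and follows essentially the same route as the paper's: both arguments hinge on the memorylessness of the \texttt{weighted LIKE} mechanism (bids on one item never affect the allocation probabilities of any other item), and your term-by-term comparison (indifference when $u_i(F_k)=0$, strict loss from declining when $u_i(F_k)>0$) is exactly the paper's two-case deviation analysis. If anything, you state the per-item decomposition more explicitly, which makes the handling of arbitrary (multi-item) misreports cleaner, but the underlying idea is identical.
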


\begin{proof}
    Case 1: Agent $j$ bids for item $F_i$ but her actual bid for item $F_i$ was $0$: In this case, the expected utility received by agent $j$ remains the same (w.r.t. to the case where she would have bidden truthfully) because her actual utility for item $F_i$ was $0$. So, she does not have any incentive to bid for item $F_i$.

    Case 2: Agent $j$ does bid for item $F_i$ but her actual utility for item $F_i$ was $u > 0$: In this case, the expected utility received by agent $j$ decreases (w.r.t. to the case where she would have bidden truthfully) because her actual utility for item $i$ was $u > 0$. If she does not bid for item $F_i$ then the probability of receiving item $F_i$ is $0$ and it also does not increase the probability of receiving the future items in the weighted LIKE mechanism (as compared to the weighted Balanced LIKE mechanism). So, she does not have any incentive to not bid for item $F_i$. 
\end{proof}
\begin{theorem}
    The weighted balanced LIKE mechanism is strategy-proof for $2$ agents with $0/1$ utilities.
\end{theorem}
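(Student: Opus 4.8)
The plan is to show that for two agents with $0/1$ utilities, neither agent can increase their expected utility by misreporting their preferences under the weighted balanced LIKE mechanism. As in the earlier strategy-proofness proof, a misreport can only take two forms: either the agent bids for an item for which their true utility is $0$, or the agent declines to bid for an item for which their true utility is $1$. The first type of deviation is never helpful because receiving a zero-utility item contributes nothing to the agent's realized utility, and under $0/1$ utilities the only quantity the agent cares about is the expected number of liked items they receive. So the crux is the second type of deviation, and the main content of the proof is to argue that an agent never gains by declining to bid on an item they actually like.

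First I would fix agent $j$ and consider an arbitrary item $F_i$ for which $u_j(F_i) = 1$, comparing two scenarios that are identical in every bid except that in one scenario $j$ bids truthfully on $F_i$ and in the other $j$ suppresses this bid. The key is to track the state variable $\frac{|A_{j,k}|}{w_j}$ that the mechanism uses to decide allocations. Declining item $F_i$ has two competing effects: it forfeits any chance of receiving $F_i$ in round $i$, but it keeps $j$'s normalized count $\frac{|A_{j,i}|}{w_j}$ lower, which can raise $j$'s priority (and hence allocation probability) in later rounds. I would make this precise using the game-tree framework of Section~\ref{sec:gametree}, where the expected utility of $j$ is the sum over \emph{paths}$(j,k)$ of products of edge probabilities, so that the comparison reduces to a comparison between two game trees that differ only in the subtree rooted at round $i$.

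The heart of the argument, and the step I expect to be the main obstacle, is a coupling or exchange argument showing that the gain in future priority can never outweigh the immediate loss of item $F_i$. With only two agents this is tractable because at each round the state is essentially one-dimensional: the mechanism compares $\frac{|A_{1,k}|}{w_1}$ against $\frac{|A_{2,k}|}{w_2}$, giving the item to whichever of the two agents has the smaller normalized count (breaking ties uniformly). I would argue that truthful bidding stochastically dominates the deviating strategy from $j$'s perspective: in a truthful play the realized allocations to $j$ can be coupled with those under the deviation so that $j$ receives at least as many liked items along every coupled pair of paths. Intuitively, any liked item that the deviation ``saves up'' to win later could have been won just as readily by staying truthful, since refusing $F_i$ only postpones, but never strictly improves, $j$'s standing relative to the single opponent.

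Concretely, I would set up an induction on the number of rounds remaining after $F_i$, maintaining the invariant that a truthful $j$ is always weakly ahead in expected liked-item count for every suffix of the item sequence and every value of the opponent's bids. The base case (no rounds after $F_i$) is immediate, since declining simply loses the possible utility from $F_i$ with nothing to compensate. For the inductive step I would condition on the opponent's bid in the current round and on the coin toss, and use the two-agent structure to verify that the truthful branch's contribution dominates term by term. The delicacy here lies in handling the tie-breaking rounds correctly, because that is precisely where a lower normalized count could shift probability mass toward $j$; I would need to check that this shift, when integrated against the lost probability of winning $F_i$, is always nonpositive for $j$. Restricting to two agents is what keeps this case analysis finite and the domination clean, which is why the statement is limited to two agents.
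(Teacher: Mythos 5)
There is a genuine gap. Your case decomposition (overbidding on a zero-utility item is harmless; the crux is declining a liked item) and your use of the game tree match the paper exactly, but for the crucial case your proposal is a programme rather than an argument: the stochastic-dominance claim and the induction on remaining rounds are stated as things you \emph{would} do, and you explicitly flag the decisive issue --- whether the probability mass shifted toward agent $j$ in later rounds, ``when integrated against the lost probability of winning $F_i$,'' is always nonpositive --- without resolving it. That unresolved check is the entire content of the theorem, so as written the proof does not close.

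The paper closes it with two concrete observations your proposal lacks. First, a quantitative bound: if the truthful edge at round $i$ had probability $p$ of giving agent $1$ the item, then lying sets that edge to $0$ and the opponent's edge to $1$; comparing products of edge weights along paths, the best possible future gain is the difference between a path product of $1$ (lying) and $(1-p)$ (truthful), i.e.\ at most $1-(1-p)=p$, exactly the utility forfeited at round $i$. Second, and crucially, a \emph{state-synchronization} argument that makes the bound terminal: once the deviating agent has ``caught up'' at some later round $i'$, the node reached in the lying subtree carries the same label $\bigl(i', \frac{|A_{1,i'}|}{w_1}, \frac{|A_{2,i'}|}{w_2}\bigr)$ as the corresponding node in the truthful subtree, so the two subtrees rooted there are identical and no further gain can accrue. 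Without this synchronization step, your induction has no mechanism to stop the deviation from accumulating small advantages over arbitrarily many future tie-breaking rounds; with it, the total gain is capped at $p$ and strategy-proofness follows. If you want to salvage your coupling approach, the coupling should be built precisely around this coincidence of states --- that is the missing idea.
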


\begin{proof}
    When there are $2$ agents, we prove that the weighted balanced LIKE is strategy-proof for any agent independent of its weight. WLOG let us try to show that it is strategy-proof for agent $1$.
    
    Case 1: Suppose the actual utility for an item $F_i$ for agent $1$ is $0$ but she bids $1$ for that item: The expected utility got by agent $1$ in the weighted balanced LIKE mechanism is $\sum_{k \in \mathcal{F}}P(F_k,1) \times u_1(F_i)$. Now, since the actual utility for the item $F_i$ is $0$ i.e. $u_1(F_i) = 0$, bidding $1$ for the item $F_i$ won't increase her expected utility. Hence, she has no incentive to bid $1$ for the item in this case.

    Case 2: Suppose the actual utility for an item $F_i$ for agent $1$ is $1$ but she bids $0$ for that item: This case again has two sub-cases.

    Case 2(a): Agent $2$ has utility $0$ for item $F_i$: Since Agent $1$ is bidding $0$ for $F_i$, it turns out that both the agents have utility $0$ for item $F_i$. In this case, we can simply disregard the item $F_i$ as if it never existed because we are not going to allocate $F_i$ to anyone in this case. Hence, in this case, agent $1$ does not have any incentive to bid $0$ for this item.

    Case 2(b): Agent $2$ has utility $1$ for item $F_i$: Suppose Agent $1$ lies at the $i^{th}$ round. Since his actual bid is $1$ and he is bidding $0$ so, he is going to lose some non-zero utility, $u$ (say) for the item $F_i$. This sacrifice of agent $1$ would be fruitful if he gains more than the utility of $u$ in future allocations. 

    Now, at the $ith$ round Agent $1$ lies her bid to $0$. This means that the weight of the edges in between the nodes in the $(i - 1)th$ level in the game tree and its child node which represents agent $1$ received the item in the $ith$ round would turn to $0$. Similarly, the weight of the edges in between the nodes in the $(i - 1)th$ level in the game tree and its child node which represents agent $2$ received the item in the $ith$ round would turn to $1$.

    Consider one of these nodes at level $(i - 1)$, $(i - 1, \frac{|A_{1,(i - 1)}|}{w_1}, \frac{|A_{2,(i - 1)}|}{w_2})$. Suppose with probability $p$ it goes to the state $(i, \frac{|A_{1,(i - 1)}| + 1}{w_1}, \frac{|A_{2,(i - 1)}|}{w_2})$, (this state represents Agent $1$ received the item) and with probability $(1 - p)$ it goes to the state $(i, \frac{|A_{1,(i - 1)}|}{w_1}, \frac{|A_{2,(i - 1)}| + 1}{w_2})$, (this state represents Agent $2$ received the item). Now, since agent $1$ is lying, the probability $p$ would change to $0$, and the probability $(1 - p)$ would change to $1$. Hence, she is losing a utility of $p$ because the game goes to state $(i, \frac{|A_{1,(i - 1)}| + 1}{w_1}, \frac{|A_{2,(i - 1)}|}{w_2})$ with probability $p$. Suppose at level $i'$ agent $1$ gains the utility which she lost due to lying at round $i$. In the best case from level $i$ to $i'$ she should have received the items at every round with probability $1$. Hence, the product of the weights along the path from the leaf present in the sub-tree rooted at $(i, \frac{|A_{1,(i - 1)}|}{w_1}, \frac{|A_{2,(i - 1)}| + 1}{w_2})$ at level $i'$ to the root of the game tree is $(1 - p)$ when she is telling the truth and $1$ when she is lying. Hence, the gain in her utility is at most $(1 - (1 - p)) = p$. She won't gain utilities for any more items in the future because at round $i'$ the sub-tree rooted at $(i, \frac{|A_{1,(i - 1)}| + 1}{w_1}, \frac{|A_{2,(i - 1)}|}{w_2})$ reaches the state with probability $1$ which has the same label as the state reached under the sub-tree rooted at $(i, \frac{|A_{1,(i - 1)}|}{w_1}, \frac{|A_{2,(i - 1)}| + 1}{w_2})$ at round $i'$. Refer to Figure ~\ref{fig:strategyproof} for a better understanding.   
\end{proof}

\begin{figure}
    \centering
    \includegraphics[scale = 0.25]{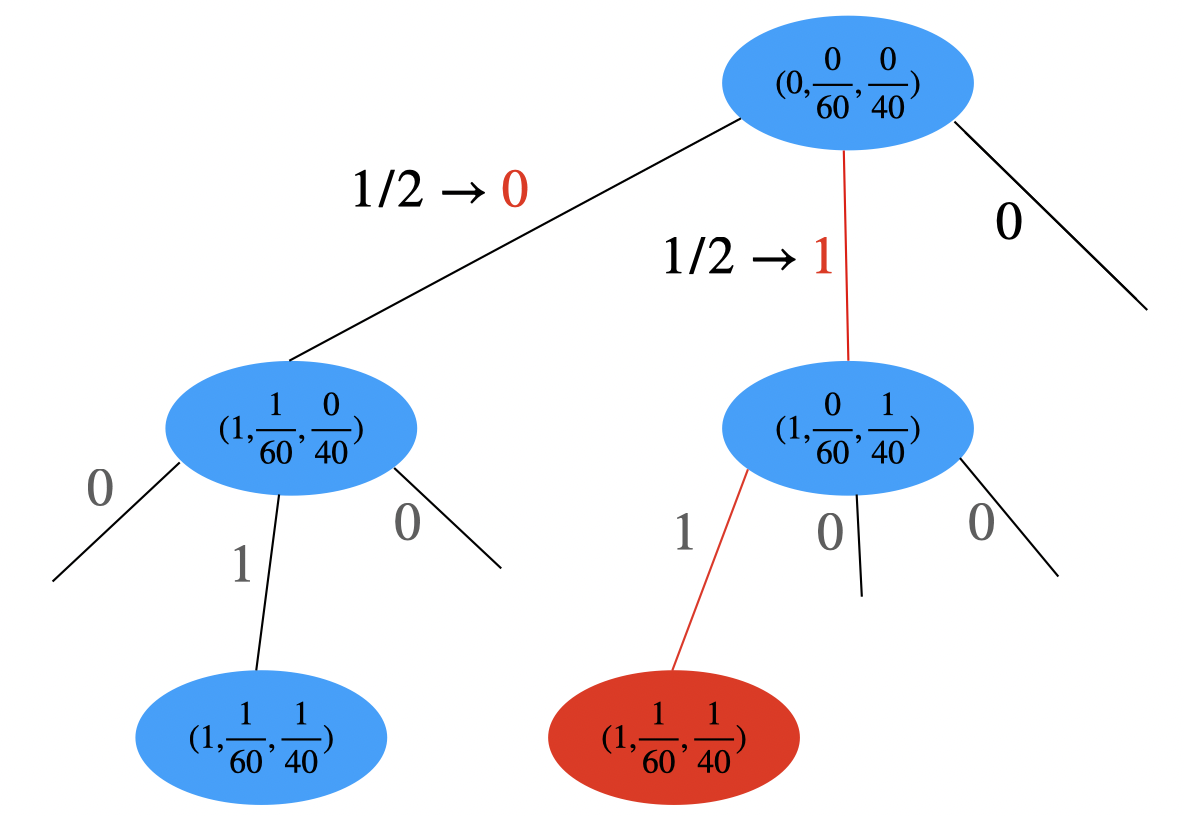}
    \caption{We have two agents $A_1$ and $A_2$ and $m$ items. Suppose agents $A_1$ and $A_2$ have utilities $1$ for the first two items and arbitrary utilities for the remaining items. Agent $1$ has lied for the first item so his probability of getting the first item changes from $1/2$ to $0$ and agent $2$ probability of getting the first item changes from $1/2$ to $1$. Agent $1$ loses a utility of $1/2$ for the first item but gains a utility of $1/2$ for the second item, a product of the weights along the red color path from the node $(1, 1/60, 1/40)$ (red color node) to $(0, 0/60, 0/40)$. He cannot gain any utility for any items in the future because at height $2$ the game has reached the states which have the same labels $(1, 1/60, 1/40)$ hence the sub-trees rooted under them would be the same.}
    \label{fig:strategyproof}
\end{figure}
\section{Envy-freeness}\label{sec:envy-free}
\begin{theorem}
    The weighted LIKE mechanism is envy-free ex-ante
\end{theorem}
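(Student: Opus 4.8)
The plan is to work with the weighted (entitlement-adjusted) notion of ex-ante envy-freeness: for every ordered pair of agents $i,j$, agent $i$ should not envy agent $j$ once expected utilities are normalized by the entitlements, i.e.\ $\frac{E[u_i(A_i)]}{w_i} \ge \frac{E[u_i(A_j)]}{w_j}$. When all weights are equal this collapses to the ordinary ex-ante envy-freeness of the unweighted \texttt{LIKE} mechanism, so this is the correct generalization to verify. The whole argument is a direct computation of these two normalized quantities under $0/1$ utilities, after which the weights cancel and a set-inclusion finishes the proof.

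First I would compute the two expected utilities explicitly. Write $W_k = \sum_{\ell \in B_k} w_\ell$ for the total weight of the agents bidding on item $F_k$, so that the \texttt{Weighted LIKE} rule assigns $F_k$ to a bidder $i \in B_k$ with probability $P(F_k,i) = w_i / W_k$ and to a non-bidder with probability $0$. Under $0/1$ utilities agent $i$ values $F_k$ at $1$ exactly when $i \in B_k$, hence
\[
E[u_i(A_i)] = \sum_{k \,:\, i \in B_k} \frac{w_i}{W_k},
\qquad
E[u_i(A_j)] = \sum_{k \,:\, i \in B_k,\ j \in B_k} \frac{w_j}{W_k},
\]
where the second identity uses that agent $j$ can only receive items it bids for, while agent $i$ assigns positive value only to items it itself bids for, so only the rounds $k$ with $i,j \in B_k$ contribute.

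The key step is to divide through by the entitlements. The factor $w_i$ (respectively $w_j$) cancels exactly, leaving
\[
\frac{E[u_i(A_i)]}{w_i} = \sum_{k \,:\, i \in B_k} \frac{1}{W_k},
\qquad
\frac{E[u_i(A_j)]}{w_j} = \sum_{k \,:\, i \in B_k,\ j \in B_k} \frac{1}{W_k}.
\]
Since $\{k : i \in B_k,\ j \in B_k\} \subseteq \{k : i \in B_k\}$ and every summand $1/W_k$ is nonnegative, the first sum dominates the second, giving $\frac{E[u_i(A_i)]}{w_i} \ge \frac{E[u_i(A_j)]}{w_j}$, which is precisely weighted ex-ante envy-freeness.

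I expect the only genuine obstacle to be fixing the correct weighted definition of ex-ante envy-freeness; once the normalization is chosen so that the $w_i$ factors cancel, the result is immediate from the subset relation between the two index sets, and the remaining computation is routine. It is worth noting that the inequality is strict whenever agent $j$ declines some item that $i$ bids on, so the mechanism does not in general equalize normalized expected utilities, guaranteeing only the one-sided envy bound.
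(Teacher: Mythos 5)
Your proposal is correct and follows essentially the same route as the paper: expand both expected utilities via the assignment probabilities $P(F_k,\cdot)$, substitute $P(F_k,\ell) = w_\ell / \sum_{l \in B_k} w_l$, and observe that the entitlements cancel after normalization. The one difference worth noting is that you are more careful than the paper about \emph{where} that probability formula applies: the paper substitutes it for every item $k$ and concludes an outright equality $\frac{E[u_i(A_j)]}{w_j} = \frac{E[u_i(A_i)]}{w_i}$, which is only valid if agent $j$ bids on every item that agent $i$ values (for items with $j \notin B_k$ one has $P(F_k,j)=0$, not $w_j/\sum_{l\in B_k} w_l$). Your version restricts each sum to its correct index set and closes the argument with the inclusion $\{k : i \in B_k,\ j \in B_k\} \subseteq \{k : i \in B_k\}$, yielding the one-sided inequality that envy-freeness actually requires; this is the rigorous form of the paper's computation rather than a different proof.
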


\begin{proof}
    So, for any two agents $i$ and $j$, we need to prove\\\\
    $\frac{E[u_i(A_j)]}{w_j} \leq \frac{E[u_i(A_i)]}{w_i}$ where $E[u_i(A_j)]$ is the expected utility of agent $i$ over the allocation assigned to agent $j$.\\\\
    $\Leftrightarrow$ $\frac{\sum_{k = 1}^{m} P(F_k, j) \times u_i(F_k)}{w_j} \leq \frac{\sum_{k = 1}^{m} P(F_k, i) \times u_i(F_k)}{w_i}$\\\\
    $\Leftrightarrow$ $\sum_{k = 1}^{m}\frac{P(F_k, j) \times u_i(F_k)}{w_j} \leq \sum_{k = 1}^{m}\frac{P(F_k, i) \times u_i(F_k)}{w_i}$ $\dots (1)$\\\\
    Now,\\\\
    $P(F_k,j) = \frac{w_j}{\sum_{l \in B_k}w_l}$ $\dots (2)$\\\\
    $P(F_k,i) = \frac{w_i}{\sum_{l \in B_k}w_l}$ $\dots (3)$ where $B_k$ is the set of agents who bid for the item $F_k$ in round $k$.\\\\
    Now substituting the values of $(2)$ and $(3)$ in $(1)$ we get

    $\sum_{k = 1}^{m}\frac{P(F_k, j) \times u_i(F_k)}{w_j} = \sum_{k = 1}^{m}\frac{P(F_k, i) \times u_i(F_k)}{w_i}$\\\\
    $\Leftrightarrow$ $\frac{E[u_i(A_j)]}{w_j} = \frac{E[u_i(A_i)]}{w_i}$ (Proved)
\end{proof}
\begin{theorem}
    The weighted balanced LIKE mechanism is bounded envy-free ex-post.
\end{theorem}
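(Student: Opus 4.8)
The plan is to establish a weighted, ex-post analogue of the ex-ante inequality already proved for the weighted LIKE mechanism: for every ordered pair of agents $i,j$ and every realized outcome, I want to bound the weighted envy $\frac{u_i(A_j)}{w_j} - \frac{u_i(A_i)}{w_i}$ by a quantity that does not grow with $m$. Since this is a positive result it should be read in the $0/1$-utility, truthful-bidding regime (the general-utility case fails, as shown earlier). The first move is to record two structural facts in that regime. First, agent $i$ only ever receives items it actually bid for, so $u_i(A_i) = |A_i|$. Second, writing $S$ for the set of items in $A_j$ that agent $i$ values, every $F_k \in S$ was bid for by \emph{both} $i$ and $j$ (by $j$ because $j$ received it, by $i$ because $i$ values it), and $u_i(A_j) = |S|$.

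Next I would exploit the defining rule of the mechanism. Each $F_k \in S$ was won by $j$ while $i$ was also a bidder, so at the moment of allocation agent $j$ had the minimum weighted count among the bidders; in particular $\frac{|A_{j,(k-1)}|}{w_j} \le \frac{|A_{i,(k-1)}|}{w_i}$, where ties broken uniformly at random still yield the weak inequality. The key idea is to apply this at a single, well-chosen round rather than summing over all of $S$: take $k^\ast = \max S$, the last item of $S$ to arrive. All of $S$ has been allocated to $j$ by round $k^\ast$, so $|S| \le |A_{j,k^\ast}| = |A_{j,(k^\ast-1)}| + 1$.

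From here the bound is almost immediate. Dividing by $w_j$ and invoking the mechanism rule at round $k^\ast$,
$$\frac{|S|}{w_j} \le \frac{|A_{j,(k^\ast-1)}|}{w_j} + \frac{1}{w_j} \le \frac{|A_{i,(k^\ast-1)}|}{w_i} + \frac{1}{w_j} \le \frac{|A_i|}{w_i} + \frac{1}{w_j},$$
where the final step is monotonicity of $|A_{i,r}|$ in $r$. Since $u_i(A_j) = |S|$ and $u_i(A_i) = |A_i|$, this reads $\frac{u_i(A_j)}{w_j} \le \frac{u_i(A_i)}{w_i} + \frac{1}{w_j}$, so the weighted envy is bounded by the constant $\frac{1}{w_j}$, independent of $m$; when all weights equal $1$ this recovers the ``envy up to one item'' guarantee of the unweighted Balanced LIKE mechanism. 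The case $S = \emptyset$ is trivial, since the envy is then nonpositive.

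I expect the main obstacle to be the bookkeeping that justifies looking only at the last item $k^\ast$ rather than tracking the whole allocation history: one must argue that $|S| \le |A_{j,k^\ast}|$ (not merely $|S| \le |A_{j,m}|$) and that the single inequality at round $k^\ast$, combined with monotonicity of $i$'s count, already dominates the entire envy, so no summation over $S$ is needed. A secondary point to handle cleanly is the tie-breaking: because the argument only uses the weak inequality $\frac{|A_{j,(k-1)}|}{w_j} \le \frac{|A_{i,(k-1)}|}{w_i}$, random tie-breaking causes no difficulty and the bound holds for \emph{every} realization, which is exactly what the ex-post claim requires.
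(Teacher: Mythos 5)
Your proof is correct, and it takes a genuinely different --- and more careful --- route than the paper's. The paper argues by an informal worst-case drift invariant: starting from a state where $\frac{|A_j|}{w_j} = \frac{|A_i|}{w_i}$, the mechanism hands the next item to a weighted minimizer, so the gap can grow by at most one weighted item; crucially, it silently identifies $u_i(A_j)$ with $|A_j|$ and $u_i(A_i)$ with $|A_i|$. That identification of $u_i(A_j)$ with $|A_j|$ is only valid when agent $i$ values every item $j$ receives, and the paper's intermediate claim $\frac{|A_j|}{w_j} \le \frac{|A_i|}{w_i} + 1$ is false in general (e.g.\ if $i$ bids for nothing while $j$ bids for everything, then $|A_j| = m$ and $|A_i| = 0$). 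Your proof repairs exactly this: by restricting to the set $S$ of items in $A_j$ that $i$ values, you ensure both agents bid at each relevant round, so the mechanism's min-ratio condition $\frac{|A_{j,(k-1)}|}{w_j} \le \frac{|A_{i,(k-1)}|}{w_i}$ genuinely applies, and $S$ is the right set to count since envy is only over items $i$ values. Your second idea --- invoking that condition only at the last round $k^\ast$ at which $j$ wins an item of $S$, then using monotonicity of $|A_{i,r}|$ in $r$ --- replaces the paper's drift argument with a short chain of inequalities that holds for every realization and every tie-breaking outcome, which is precisely what an ex-post statement demands. Quantitatively the two arguments deliver the same guarantee of one weighted item: your bound is $\frac{1}{w_j}$, and the paper's own worst case also exhibits a gap of exactly $\frac{1}{w_j}$ (its displayed ``$\frac{n_2}{w_j}$'' is a typo for $\frac{n_2}{w_i}$), so its stated ``$+1$'' conclusion implicitly assumes $w_j \ge 1$. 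In short, your approach buys rigor at no extra cost: it handles heterogeneous bidding sets, random tie-breaking, and arbitrary positive weights, and it recovers the unweighted ``envy up to one item'' bound when all weights equal $1$.
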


\begin{proof}
    Here for any two agents $i$ and $j$, we need to prove $\frac{u_i(A_j)}{w_j} \leq \frac{u_i(A_i)}{w_i} + 1$. Since, we have $0/1$ utilities $u_i(A_j) = |A_j|$ and $u_i(A_i) = |A_i|$. So we need to prove $\frac{|A_j|}{w_j} \leq \frac{|A_i|}{w_i} + 1$. Now, suppose $\frac{|A_j|}{w_j} = \frac{|A_i|}{w_i}$ when $|A_j| = n_1$ and $|A_i| = n_2$ thus $\frac{n_1}{w_j} = \frac{n_2}{w_i}$. Now, the weighted Balanced LIKE mechanism allocates the next item that has the least number of items allocated until the previous round relative to its weight. Hence, in the worst case, $\frac{|A_j|}{w_j} = \frac{n_1 + 1}{w_j}$ and $\frac{|A_i|}{w_i} = \frac{n_2}{w_j}$. Hence, $\frac{|A_j|}{w_j} \leq \frac{|A_i|}{w_i} + 1$ thus, $\frac{u_i(A_j)}{w_j} \leq \frac{u_i(A_i)}{w_i} + 1$.
\end{proof}

\begin{theorem}
    The weighted balanced LIKE mechanism is envy-free ex-ante.
\end{theorem}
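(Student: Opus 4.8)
The plan is to follow the template of the weighted LIKE ex-ante proof: fix two agents $i$ and $j$ and reduce the target $\frac{E[u_i(A_j)]}{w_j} \le \frac{E[u_i(A_i)]}{w_i}$ to a statement purely about the allocation probabilities $P(F_k,\cdot)$. Since utilities are $0/1$, agent $i$ bids exactly for the set $T_i$ of items it likes, so $E[u_i(A_i)] = \sum_{k \in T_i} P(F_k,i)$, whereas $E[u_i(A_j)] = \sum_{k \in T_i \cap T_j} P(F_k,j)$, because agent $j$ can only ever receive items it bids for. Writing $S = T_i \cap T_j$ for the commonly-demanded items, the claim becomes
\[
w_i \sum_{k \in S} P(F_k,j) \le w_j \sum_{k \in T_i} P(F_k,i).
\]
First I would discard the items of $T_i \setminus S$ from the right-hand side (they contribute only nonnegative terms), reducing the goal to the symmetric-looking inequality $\frac{1}{w_i}\sum_{k\in S} P(F_k,i) \ge \frac{1}{w_j}\sum_{k\in S} P(F_k,j)$ over the items the two agents jointly contest.

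Next I would evaluate each $P(F_k,\cdot)$ through the game tree of Section~\ref{sec:gametree}: $P(F_k,j)$ is the sum over the states in \emph{received-states}$(j,k)$ of the products of edge probabilities along \emph{paths}$(j,k)$. The reason for passing to the tree is that at every internal node the balancing rule sends $F_k$ to the bidder minimizing $\frac{|A_{\cdot,k-1}|}{w_\cdot}$, so conditioned on reaching a node the higher-entitlement of $i,j$ is weakly favored whenever both are current minimizers. I would then proceed by induction on the height $k$, carrying as an invariant a control on the expected weighted allocation $\frac{E[|A_{i,k}|]}{w_i}$ on the contested items, and pushing it through one level of the tree using the allocation rule node by node.

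The main obstacle is exactly this per-node-to-global step. Conditioned on a given history the per-item comparison can point the wrong way: if agent $j$ has so far received far fewer weight-normalized items than $i$, then $F_k$ goes to $j$ with probability $1$ and to $i$ with probability $0$, so $\frac{P(F_k,j\mid\text{history})}{w_j} > \frac{P(F_k,i\mid\text{history})}{w_i}$. The desired inequality can therefore only be recovered after summing over the entire tree, where such ``$j$ is behind'' histories are themselves rare in the right proportion to the weights. Making this aggregation rigorous --- showing that the correlation between ``$j$ lags in $\frac{|A_{\cdot,k-1}|}{w_\cdot}$'' and ``$j$ wins $F_k$'' offsets the weight factors --- is the delicate heart of the argument, and I would isolate it as a separate lemma on the expected weighted allocations produced by the balancing rule before assembling the final bound. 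I would pay particular attention to the forced tie at the first contested round (where items split uniformly regardless of weight), since that is the configuration in which the normalized expectations are hardest to reconcile and where the argument is most likely to require an extra hypothesis.
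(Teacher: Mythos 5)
Your opening reduction is where the proof breaks, and it breaks fatally. After discarding the items of $T_i \setminus S$ from the right-hand side, you are committed to proving the strictly stronger claim $\frac{1}{w_i}\sum_{k\in S} P(F_k,i) \ge \frac{1}{w_j}\sum_{k\in S} P(F_k,j)$ over the contested items only. That claim is false for the weighted balanced LIKE mechanism, precisely because the uncontested items you threw away are what drive the balancing. Take $w_i = w_j = 1$, item $F_1$ liked only by agent $i$, item $F_2$ liked by both. Agent $i$ wins $F_1$ outright, so at round $2$ agent $j$ is the unique minimizer of $|A_{\cdot,1}|/w_{\cdot}$ and receives $F_2$ with probability $1$: the contested-items inequality reads $0 \ge 1$. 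The original envy-freeness inequality still holds ($1 \le 1$), but only because the discarded term $P(F_1,i)=1$ compensates. So the uncontested items cannot be dropped; they are exactly the counterweight to the histories in which the balancing rule steers contested items toward the other agent. You do notice this phenomenon later (the ``$j$ is behind'' histories), but by then your reduction has already deleted the terms needed to offset it. And even on its own terms the proposal is a plan rather than a proof: the aggregation step you call ``the delicate heart of the argument'' is only announced as a lemma, never proved.

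The paper's proof takes a different route that keeps all items in the accounting: it inducts on the number of items. When both agents bid on everything, the symmetry of the game tree (every node has one child for ``$i$ received'' and one for ``$j$ received'') makes the sums of path products along \emph{paths}$(i,k+1)$ and \emph{paths}$(j,k+1)$ equal, so the increases in expected utility match. When one agent skips items, the paper argues the compensation explicitly: items skipped by agent $i$ go into agent $j$'s bundle, after which the balancing rule favors agent $i$ on the next jointly-bid item, and conversely. That pairing of uncontested gains against subsequently disfavored contested rounds is exactly the mechanism your reduction discards; any correct proof along your lines would have to restore it rather than drop those terms at the outset.
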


\begin{proof}
    We prove this using induction on the number of items. For any two agents, agent $i$ and agent $j$, agent $i$ can only envy agent $j$ on the items which are present in $A_j$ and for which she has a utility of $1$. Again since the items have been added to $j's$ bundle so $j$ also has a utility of $1$ for those items. Suppose until level $k$ of the game tree, (that is until $kth$ item) the weighted balanced LIKE mechanism is envy-free ex-ante. So, for the $(k + 1)th$ item we need to show that the increase in the expected utility for agent $i$ is at least the increase in the expected utility for agent $j$.  At Section \ref{sec:gametree} we saw that the expected utility of agent $j$ for the item $(k + 1)$ is the sum of the products of the weights along the \emph{paths$(j, (k + 1))$}, same goes for agent $i$. Consider the special case where agents $i$ and $j$ bid for all the items. In that case, note that the sum of the products of the weights along the \emph{paths$(j, (k + 1))$} and the sum of the products of the weights along the \emph{paths$(i, (k + 1))$} would be the same. This is because each node has a child node in the game tree representing agent $j$ got the item in the next level and also a child node representing agent $i$ got the item in the next level. 
    
    For the other case, where either agent $j$ or agent $i$ (or both) is not bidding for some items, the increase in the expected utility balances out between the two agents using our weighted balanced LIKE mechanism. Suppose agent $i$ has not bid for a sequence of items then those items must be added to $j's$ bundle (if not, then we can disregard those items for our analysis of envy-freeness). Now, for the next item for which both of them bid together agent $i$ has a higher probability of receiving that item. So, agent $i$ does not envy agent $j$ because she received the item for which both bid with higher probability, and agent $j$ does not envy agent $i$ because she received more items in the past when agent $i$ did not bid for the previous items.     
\end{proof}

\section{Conclusion}
In conclusion, this study has analyzed the problem of online fair allocation with 0/1 utility mechanisms in \cite{aleksandrov2015online} and introduced two online weighted fair division mechanisms, the \texttt{Weighted LIKE} mechanism, and the \texttt{Weighted Balanced LIKE} mechanism. These mechanisms have been developed as a first attempt to expand and adapt the online allocation with equal entitlements problem to a more general weighted condition. Moreover, we have proven that the proposed algorithms adhere to fundamental principles such as \emph{Strategyproofness} and \emph{Envy-freeness}. The results of this study provide a starting point for future research in the area of online fair allocation with weighted entitlements. In particular, our future work will focus on improving the performance of the proposed mechanisms and extending them to more complex settings with additional constraints and preferences. Overall, the proposed mechanisms can be valuable tools for ensuring fairness in various online allocation scenarios, such as resource allocation, job assignments, and scheduling.

Our results for the weighted LIKE and weighted Balanced LIKE mechanisms match those obtained by Aleksandrov et al. \cite{aleksandrov2015online} perfectly. Despite the fact that the outcomes are the same, our project's proof methodology differs from that of Aleksandrov et al. The authors of the Aleksandrov et al study only offer proof sketches for the theorems pertaining to strategy-proofness and envy-freeness. All of the proofs we did for strategy-proofness and envy-freeness would also apply in their context because our work is an extension of their study. We hope that our project will make it easier for readers to grasp how the theorems in \cite{aleksandrov2015online} are proved.

\bibliographystyle{alpha}
\bibliography{biblio}
\end{document}